\documentclass[12pt]{article}
\usepackage{pudlatex}
\usepackage{fullpage}

\title{On the complexity of finding falsifying assignments for
  Herbrand disjunctions}

\author{Pavel Pudl\'ak
\thanks{The author is supported by the ERC Advanced Grant
  339691 (FEALORA) and the institute grant RVO: 67985840}}

\begin{document}
\maketitle

\begin{abstract}
  Suppose that $\Phi$ is a consistent sentence. Then there is no
  Herbrand proof of $\neg \Phi$, which means that any Herbrand
  disjunction made from the prenex form of $\neg \Phi$ is
  falsifiable. We show that the problem of finding such a falsifying
  assignment is hard in the following sense. For every total
  polynomial search problem $R$, there exists a consistent $\Phi$ such
  that finding solutions to $R$ can be reduced to finding a falsifying
  assignment to an Herbrand disjunction made from $\neg \Phi$. It has
  been conjectured that there are no complete total polynomial search
  problems. If this conjecture is true, then for every consistent
  sentence $\Phi$, there exists a consistence sentence $\Psi$, such
  that the search problem associated with $\Psi$ cannot be reduced to
  the search problem associated with $\Phi$.
\end{abstract}

% 2010 Mathematics Subject Classification: 

\section{Introduction}

Let $\Phi:=\forall x_1\dots\forall x_k\phi(x_1\dts x_k)$ be a
universal sentence (where $\phi$ is an open formula). According to
Herbrand's theorem, $\Phi$ is inconsistent if and only if, for some terms
$\tau_{ij}$, the disjunction  $\bigvee_{i=1}^n
  \neg\phi(\tau_{i1}\dts \tau_{ik})$ is a propositional
  tautology. Thus if $\Phi$ is consistent, every conjunction  $\bigwedge_{i=1}^n
  \phi(\tau_{i1}\dts \tau_{ik})$ is a satisfiable proposition. In this
  paper we study the computational problem of finding satisfying
  assignments for such conjunctions assuming that $\Phi$ is
  consistent. 
We call this problem the \emph{Herbrand consistency search} for $\Phi$.
This problem can be viewed from three different perspectives:

\medskip
1. We ask how difficult it is to verify that $\Phi$ is consistent;
more precisely, how difficult it is to verify that a given
disjunction is not a Herbrand proof of $\neg \Phi$. This is somewhat similar to
the well-known problem about finitistic consistency statements,
where we ask how difficult it is to find a proof that there is no proof
of $\neg \Phi$ of length $n$, see~\cite{finitistic}. The two problems
are, however, of essentially different nature. For one thing, we
consider all proofs of length $n$ when we talk about finitistic consistency.
For another, transforming the usual proofs (Hilbert style, sequent
calculus with cuts, etc.) into Herbrand proofs results in
nonelementary blowup of size.

2. A model of a consistent sentence can be built on the Herbrand
universe (the set of all terms in the language of $\Phi$). To this end
we have to decide the truth of all atomic formulas so that the
resulting structure is a model of $\Phi$. Herbrand consistency search
can be then viewed as the problem of deciding the truth values of
atomic formulas in order to obtain a \emph{partial model} of $\Phi$.

3. It is important to fully understand the complexity of special cases
of the $\bf NP$-complete problem SAT (satisfiability of CNF
formulas). Each consistent universal sentence whose matrix is a CNF
gives us a natural class of CNFs in the way described above. For some
sentences, we can show that the problem is solvable in polynomial
time. For some other sentences, we believe that this is not the case,
but it is unlikely that we can prove that they are $\bf NP$-hard
problems, because the CNFs are all satisfiable. Instead, we can argue
that for strong sentences their Herbrand consistency search problems
are not solvable in polynomial time, because these problems capture
the complexity of all total polynomial search problems, as we explain
below.

\medskip
A \emph{polynomial search problem} is given by a binary relation $R(x,y)$
decidable in polynomial time and a polynomial bound on
the length of $y$ in terms of the length of $x$. The task is, for a
given $x$, to find $y$ such that the relation holds true and the
polynomial bound is satisfied, if there is
any such $y$. A \emph{total polynomial search problem} is a polynomial
search problem that has a solution for every $x$. While there are
polynomial search problems that are $\bf NP$-hard, it seems unlikely
that we could prove  $\bf NP$-hardness of a total polynomial search problem. The
additional condition of totality prevents us to use any know
techniques for showing $\bf NP$-hardness, which suggest that it may actually be
impossible.

There are naturally defined reductions of one total polynomial search
problem to another. This enables us to study classes of these problems
closed under reductions and a number of important classes have been
defined \cite{papadim}. These classes are useful for classification of
specific search problems. In proof complexity such classes are used to
characterize certain sentences provable in fragments of Bounded
Arithmetic (see, e.g., \cite{BK,skelly-thapen}). The structure of the
quasiorder of polynomial reducibility has not been much studied,
except for specific classes of problems. One can easily show that for
every finite set of total polynomial search problem there is another
one to which all are reducible. Since the condition of totality is not
syntactical, we are not able to prove that there is a greatest element
in this quasiorder, i.e., that there is a complete total polynomial
search problem. We conjecture that there is none.

In this paper we prove that every total polynomial search problem is
reducible to the Herbrand consistency search for some consistent
sentence $\Phi$. This means that Herbrand consistency search problems
can have arbitrary high complexity in the hierarchy of total
polynomial search problems. We also prove that polynomial reducibility
reflects the strength of consistent sentences in the sense that if a
universal sentence $\Psi$ logically follows from $\Phi$, then the
Herbrand consistency search for $\Psi$ is reducible to Herbrand
consistency search for $\Phi$. This is in line of our project to find
connections between provability and computational complexity, see
Section~6.4 of~\cite{logical}. We will also define Herbrand
consistency search for general sentences in prenex form, but the
relation between provability and reducibility of the corresponding
Herbrand consistency search problems is not clear for sentences that
are not universal.

The conjecture that there are no complete total polynomial search
problems can be partially justified by showing an oracle with respect
to which it holds true. Since we have not found this result in the
literature, we present it in the last section.

\section{Preliminaries}

We will consider first order logic without equality, but constants and
function symbols will play an important role. Let $\Sigma:=\exists x_1\dts
x_k\sigma(x_1\dts x_k)$ be an existential sentence (where $x_1\dts
x_k$ are all variables in $\sigma$). \emph{Herbrand's Theorem} states
that $\Sigma$ is provable (logically valid) if and only if there exist
terms $\tau_{ij}$, $i=1\dts n$, for some $n$, $j=1\dts k$ such that 
\[
\bigvee_{i=1}^n\sigma(\tau_{i1}\dts \tau_{ik})
\]
is a propositional tautology (see, e.g., \cite{buss,buss-h}). We will study the
dual version of this statement: a universal sentence
$\Phi:=\forall x_1\dts x_k\phi(x_1\dts x_k)$ is consistent if and only
if for all families of terms $\tau_{ij}$, $i=1\dts n$, $j=1\dts k,$
\[
\bigwedge_{i=1}^n\phi(\tau_{i1}\dts \tau_{ik})
\]
is satisfiable as a propositional formula (i.e., we can assign truth
values to the atomic formulas so that the truth value of the
conjunction is truth). A general sentence in a prenex form can be
transformed into a universal sentence by skolemization, which we
denote by 
\[
Sk(\forall\bar{x}\exists\bar{y}\forall\bar{z}\exists\bar{u}\dots
\phi(\bar{x},\bar{y},\bar{z},\bar{u}\dots))\ :=\ 
\forall\bar{x}\forall\bar{z}\dots
\phi(\bar{x},\bar{f}(\bar{x}),\bar{z},\bar{g}(\bar{x},\bar{z})\dots),
\]
where we use bars to denote strings of symbols and $f,g,\dots$ are new
function symbols. If $\Phi$ is $\bigwedge_i\Phi_i$, where the
sentences $\Phi_i$ are in prenex form, then we define $Sk(\Phi)$ to be
$\bigwedge_iSk(\Phi_i)$ (where each term uses different function
symbols). Clearly, $Sk(\Phi)\vdash\Phi$, but the opposite is not true
in general.  For the sake of simplicity, we will only define Herbrand
consistency search for conjunctions of prenex sentences, although
Herbrand's theorem has been proved for general sentences.

\begin{definition}
  Let $\Phi$ be a consistent sentence which is a conjunction of
  sentences in prenex form. Let $\phi(x_1\dts x_k)$ be the matrix
  (the quantifier-free part) of
  the skolemization of $\Phi$.  Then $HCS(\Phi)$, the \emph{Herbrand
    Consistency Search for $\Phi$}, is the following total polynomial
  search problem: \bi
\item given terms $\tau_{ij}$ in the
  language of $\phi$, $i=1\dts n$, $j=1\dts k$, find a truth
  assignment to the atomic subformulas occurring in $\phi(\tau_{i1}\dts
  \tau_{ik})$, for $i=1\dts n$, that makes $\bigwedge_{i=1}^n
  \phi(\tau_{i1}\dts \tau_{ik})$ true.
\ei
\end{definition}

{\it Example 1.} Consider an axiomatization of the theory of dense
linear orderings. Using a Skolem function $f(x,y)$, we can present it
as a universal theory with the axioms (stated without the universal quantifiers)
\[\begin{array}{l}
0<1,\\ 
\neg x<x,\\ 
x<y\vee x=y\vee y<x,\\ 
x<y\wedge y<z\to x<z,\\ 
x<f(x,y)\wedge f(x,y)<y, 
\end{array}\]
plus the identity and equality axioms. 
Let $\phi(x,y,z)$ be the conjunction of these axioms.  Given terms
$\tau_{i,j}$, $i=1\dts m$, $j=1,2,3$, we can easily (certainly in
polynomial time) find truth assignments to the atomic formulas
$\tau_{i,j}=\tau_{i',j'}$ and $\tau_{i,j}<\tau_{i',j'}$ such that the conjunction
\(\bigwedge_i\phi(\tau_{i,1},\tau_{i,2},\tau_{i,3}) \) becomes
true. To find such an assignment we need only to find an
interpretation of the terms in a finite linear ordering and then to
assign the truth values according to this interpretation. To find such
an interpretation, we start by ordering the variables of the terms
$\tau_{i,j}$ in an arbitrary way. Then we gradually extend the
ordering to more complex subterms of the terms
$\tau_{i,j}$. Specifically, having an interpretation of terms $\tau$
and $\sigma$ and a non-interpreted term $f(\tau,\sigma)$, we place
$f(\tau,\sigma)$ on an arbitrary position strictly between $\tau$ and
$\sigma$.

\bigskip {\it Example 2.} Let $\Phi$ be a prenex form sentence 
axiomatizing a fragment of Peano Arithmetic. Consider a skolemization
of $\Phi$. If $\Phi$ is sufficiently complex, some Skolem functions
may be difficult to compute, or they even may be non-computable. Then
finding interpretation in a finite part of the natural numbers may
also be difficult. Note, however, that this does not imply that
finding a satisfying truth assignment must be difficult. In
particular, finding such an assignment is always doable in
nondeterministic polynomial time whatever the complexity of the Skolem
functions is.

\begin{definition}
A \emph{total polynomial search problem} is defined by a binary
relation $R(x,y)$ computable in polynomial time and a polynomial $p$
such that for every $x$ there exists $y$ such that $|y|\leq p(|x|)$ and
$R(x,y)$. The task is, for a given $x$, to find a $y$ satisfying the
two conditions above.
\end{definition}
Here we use $|x|$ to denote the length of $x$, i.e., the number of
bits in an encoding of $x$. In the following definition we will omit
polynomial bounds on $y$s and assume that they are implicit in $R$ and $S$.

\begin{definition}
Let $R$ and $S$ be total polynomial search problems. We say
that $R$ is \emph{polynomially reducible} to $S$ if $R$ can be solved in
polynomial time using an oracle that gives solutions to $S$. We say
that $R$ is \emph{many-one polynomially reducible} to $S$, if it is
polynomially reducible using one query to the oracle for $S$.
\end{definition}
Clearly, both relations are reflexive and transitive. Note that if
{\bf P}={\bf NP}, then every search problem is reducible to every
other one. Hence we can only prove non-reducibility assuming some
conjectures in computational complexity.

\section{Main result}

\begin{theorem}\label{t-1}
  For every total polynomial search problem $R$, there exist a
  consistent universal sentence $\Phi$ such that the problem $R$ is
  many-one polynomially reducible to $HCS(\Phi)$.
\end{theorem}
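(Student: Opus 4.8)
The plan is to encode an arbitrary total polynomial search problem $R(x,y)$ directly into a universal sentence, using the instance $x$ to drive the choice of Herbrand terms. Fix $R$ with its polynomial bound $p$ and the polynomial-time machine $M$ deciding it. The key observation is that a single Herbrand disjunct $\phi(\tau_{i1}\dts\tau_{ik})$ need not ``know'' anything semantic; we can make $\phi$ a conjunction of clauses that, once a family of terms built from $x$ is substituted, force any satisfying truth assignment to literally spell out an accepting computation of $M$ on some input $(x,y)$. Since $R$ is total, such a $y$ exists, so the sentence is consistent (it has a genuine model: e.g. the standard model of arithmetic together with interpretations of auxiliary symbols that describe the computation); and any satisfying assignment to the substituted conjunction can be decoded in polynomial time to recover a string $y$ with $R(x,y)$.

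Concretely, I would work in a language with a constant for each bit position, function symbols for ``bit of $x$'', ``bit of $y$'', ``tape cell $(t,s)$'', ``state at time $t$'', and Skolem functions witnessing the existential ``there is an accepting run''. The universal sentence $\Phi$ asserts: the bits of $x$ are whatever they are (this is where the instance enters, via terms), $y$ has length $\le p(|x|)$, and the cell/state functions describe a valid, accepting computation of $M$ on $(x,y)$ according to $M$'s transition table — all expressible by finitely many universally quantified clauses of the standard Cook–Levin type, but with the computation laid out over terms rather than propositional variables. The reduction, given $x$, outputs the Herbrand instance consisting of the terms obtained by plugging the numerals $0,1,\dots$ and the bit-constants for $x$ into these clauses for all relevant time/space indices up to the polynomial bound; this is a single many-one query. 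A satisfying assignment to the resulting CNF must, by the clause structure, encode an accepting computation, from which we read off $y$.

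The main obstacle, and the step that needs the most care, is ensuring that $\Phi$ is genuinely \emph{consistent} as a first-order sentence while still being strong enough that \emph{every} satisfying assignment to \emph{every} Herbrand instance is decodable — in particular for families of terms the reduction would never produce. Totality of $R$ gives consistency (build a model from a real computation), but I must check that the clauses are written so that no ``junk'' assignment on an arbitrary term family can satisfy the conjunction without still determining a valid $y$ for the relevant $x$; this typically forces the encoding to tie the bits of $x$ to fixed closed terms (numerals and bit-constants that the reduction hardwires) rather than to free variables, so that the instance $x$ is rigid and only $y$ and the computation tableau are ``free''. A secondary point is bookkeeping: verifying that the matrix $\phi$ has constant size (independent of $x$) and that all blow-up is pushed into the polynomially many terms, so the reduction runs in polynomial time and issues exactly one oracle call. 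Once the encoding is set up with these properties, correctness of the reduction is the routine Cook–Levin verification.
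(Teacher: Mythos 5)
Your overall strategy coincides with the paper's: formalize an accepting computation of the polynomial-time machine $M$ on $(x,y)$ by a fixed universal sentence (a Cook--Levin style tableau over terms), get consistency from totality of $R$ by building a model from an actual solution $y=\gamma(x)$, let the reduction instantiate the matrix at polynomially many terms, and decode $y$ from any satisfying assignment. So the skeleton is right. But there is a genuine gap, and it sits exactly at the point you yourself flag as ``the main obstacle'': you never give the mechanism by which the \emph{fixed} sentence $\Phi$ forces every satisfying assignment of the instantiated matrix to encode the \emph{correct} input $x$ (including where $x$ ends). Since $\Phi$ may depend only on $R$, the instance can enter only through the syntactic shape of the substituted terms; a priori a ``junk'' assignment could set the atoms describing the input row of the tableau to the bits of some other string $x'$ and then exhibit a perfectly valid accepting computation on $(x',y')$, which decodes to a useless $y'$. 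Your proposed fix --- ``tie the bits of $x$ to fixed closed terms (numerals and bit-constants that the reduction hardwires)'' --- is not worked out and, as phrased, does not obviously work: a finite language cannot contain a constant per bit position, and merely placing constants inside the substituted terms forces nothing about the truth assignment unless $\Phi$ contains axioms whose term instances \emph{propositionally} derive the intended atomic values from the term structure.

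The paper supplies precisely this missing machinery: the input word $w$ is represented by the closed term $f_{w_{n-1}}\cdots f_{w_0}(\Lambda)$ built from word-constructor function symbols $f_0,f_1$, together with a length function $\ell$, a successor $S$ with identity/equality axioms, and axioms stating how $f_0,f_1$ extend a word (axioms 1--5). Two lemmas (the paper's Lemmas~\ref{lA} and \ref{lB}) then show that from polynomially many term instances of these axioms one can propositionally derive both the inequalities $S^i(0)\neq S^j(0)$ and the literals $(\neg)^{w_i}P(\tau,S^i(0))$ and $\ell(\tau)=S^n(0)$; only then do the initial-configuration and transition axioms pin the whole tableau to a computation of $M$ on $w$ (with the end-of-input marker handled via $\ell$ and register~2, a detail your sketch also omits: without forcing the length, the assignment could describe a run on a longer input extending $x$). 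To complete your argument you would need to either reproduce this construction or replace it by an equally explicit one (e.g.\ two bit-constants substituted into a designated variable of the matrix, together with identity axioms and an explicit end-marker, with a proof that the relevant literals are propositionally forced). Without that, the claim that ``a satisfying assignment to the resulting CNF must, by the clause structure, encode an accepting computation'' on the given $x$ is unsupported; the rest of your proposal (consistency, polynomial-time construction of the single query, decoding of $y$) matches the paper and is fine.
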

\begin{proof}
  Given a total polynomial search problem $R$, the sentence $\Phi$
  will express that $R$ is total. This can, certainly, be done in
  various ways, but it does not automatically guarantee that we can
  reduce $R$ to $HCS(\Phi)$. Therefore we have to describe the
  formalization in more detail.

  We start with a brief high-level overview of the proof.  We will
  take a Turing machine $M$ that decides in polynomial time the
  relation $R$ and express that for a given $x$ there exists $y$ and
  an accepting computation of $M$ on the inputs $x$ and $y$. Thus the
  first step is to define terms that will represent an input word
  $x$. Then we need to ensure that the bits of $x$ are encoded into
  the truth values of some atomic formulas. To this end we use an
  elementary theory of the successor function $S$ and use terms
  (numerals) $S^i(0)$ as indices of a one dimensional
  array. Specifically, we use atomic formulas $P(x,S^i(0))$ to
  determine the bits of $x$ ($P(x,S^i(0))$ false means $x_i=0$,
  $P(x,S^i(0))$ true means $x_i=1$). A
  computation of $M$ can be represented by a two-dimensional array
  with entries in a finite alphabet. The elements of the alphabet can
  be encoded by bit strings of length $d$ for some constant $d$. So we
  represent the computation by $d$ ternary relations $Q_k(z,s,t)$. The
  second part of the input $y$ will be implicitly encoded in the
  array. Given a term $\tau$ representing an input word $x$, the term
  $F(\tau)$, where $F$ is a function symbol, will denote the object
  representing the computation. Thus the bits of the array corresponding to
  $F(\tau)$ will be defined by the truth values of $Q_k(F(\tau),S^i(0),S^j(0))$.
The matrix of $\Phi$ will be a conjunction of
several formulas which we can view as axioms of a simple theory
describing computations of $M$. One of the axioms says that $M$
accepts, so the implicitly encoded $y$ must be such that $R(x,y)$
holds true. It will not be hard to see that we need only a polynomial
number of term instances of the axioms in order to guarantee that the
truth values encode a computation on the input word correctly. In fact
these term instances can easily be defined from the input word. The
implicitly encoded $y$ can also be easily read from the truth values,
thus the construction gives a many-one polynomial reduction.

\medskip
Now we describe the formalization in more detail, but since it is
fairly routine, we leave some parts to the reader.

Let a total polynomial search problem be given by a relation $R$
computable in polynomial time. So we assume that for every $x$ there
exists a $y$ such that $R(x,y)$ and the length of $y$ is bounded by a
polynomial in the length of $x$. 
Let $M$ be a (deterministic) Turing machine that in polynomial time
decides the relation $R(x,y)$. We will also assume that $M$ has a certain
form that will make the formalization easier. Specifically, we will
assume the following properties of $M$.  
\ben
\item For given $x,y\in \{0,1\}^*$, $M$ always stops after $p(|x|)$
  steps, where $p$ is some polynomial, provided that the input word $x$
  is coded appropriately (see below). This means that it reaches one
  of the two final states, one of which is the accepting state and the other
  is the rejecting state. %After that it does not make any action.
\item The tape of $M$ is infinite in one direction. The squares of the
  tape will be indexed by $0,1,2,\dots$. We will view squares as
  having $d$ registers indexed $1 \dts d$; every register contains $0$
  or $1$. The contents of a square encode the symbol on the tape,
  the presence/non-presence of the head and the state of the machine.
\item Registers 1 and 2 will be used to encode $x$. The content of
  registers 1 are the bits of $x$ and registers 2 determine the end of
  the word $x$ (the first $1$ in register 2 is in the first square
  after the end of $x$). The input word $y$ will be coded by registers
  3 and 4 in the same way. An occurrence of 1 in register 5 marks the
  position of the head of the machine.
\item Initially all registers with numbers greater than 5 contain
  zeros. Registers 5 contain only one 1 and this is in the square 0.
\item Register 6 will be used to determine that $M$ has stopped and
  rejected; i.e., if 1 occurs in any of the registers 6, then the
  machine rejects.
\item The machine starts by looking for the mark that determines the
  end of $x$. After that it looks for the mark that determines the end
  of $y$. If it does not find it in the given polynomial limit, it
  will stop and reject. % (Alternatively, we could assume that the mark
  % for $y$ is given together with $x$.)  
If the mark is all right, the machine
  computes the relation $R(x,y)$, i.e., it will stop and accept iff
  the relation holds true.
\een

Our sentence $\Phi$ will use relation symbols $=,P(x,t)$,
$Q_i(z,s,t)$, for $i=1\dts d$, constants $0,\Lambda$, and function symbols
$S(x),f_0(x),f_1(x),\ell(x),F(x)$. The sentence will be a universal closure of
formulas that we present in a form of a finite number of axioms.

First we need 
\ben
\item[1.] the axioms of identity and the axiom of equality for $S$
\[
s=t\to S(s)=S(t).
\]
\een 
We do not postulate the axioms of equality for other function and
relation symbols, since we only need them to derive the inequalities
in Lemma~\ref{lA}. Note that these axioms can be stated using three
variables, say, $r,s,t$. The symbol $S$ represents the successor
function, so we postulate the usual axioms 
\ben
\item[2.] $0\neq S(t)$,\ $s\neq t\to S(s)\neq S(t)$.
\een

We leave the proof of the following easy fact to the reader.
\begin{lemma}\label{lA}
The propositions $S^i(0)\neq S^j(0)$ for all $i,j\leq n$, $i\neq j$ are
  derivable using propositional logic from the term instances of
  axioms 1. and 2. for all terms of the form $S^k(0)$, $k\leq n$.
\end{lemma}

Next we need some axioms in order to be able to write down terms that
represent input words $x$. The intended interpretation of the
predicate $P$ is:  the $i$-th bit of $x$ 
is $0$ if $P(x,S^i(0))$ is false, and $1$ otherwise.
The constant $\Lambda$ represents the empty word and
$\ell(x)$ represents the length of a binary word $x$. Therefore
our first axiom is 
\ben
\item[3.] $\ell(\Lambda)=0$.
\een
The functions $f_0$ and $f_1$ add bits $0$ and $1$ at the end of the
word.
\ben
\item[4.] 
\begin{tabbing}
\= $\ell(f_0(x))=S(\ell(x))\ \wedge$\\
\>$\neg P(f_0(x),\ell(x))\ \wedge$\\
\>$(s\neq \ell(x) \to (P(f_0(x),s)\equiv P(x,s)))$.
\end{tabbing}

\item[5.] 
\begin{tabbing}
\= $\ell(f_1(x))=S(\ell(x))\ \wedge$\\
\>$P(f_1(x),\ell(x))\ \wedge$\\
\>$(s\neq \ell(x) \to (P(f_1(x),s)\equiv P(x,s)))$.
\end{tabbing}
\een

Thus given a word $w=(w_0\dts w_{n-1})\in\{0,1\}^n$, the term
$f_{w_{n-1}}\dots f_{w_1}f_{w_0}(\Lambda)$ represents it in our
theory. We need also to show that this fact has a propositional
proof using a small number of instances of the axioms.

\begin{lemma}\label{lB}
Let $\tau=f_{w_{n-1}}\dots f_{w_1}f_{w_0}(\Lambda)$. The
propositions 
\[
(\neg)^{w_0}P(\tau,0),(\neg)^{w_1}P(\tau,S(0))\dts
(\neg)^{w_{n-1}}P(\tau,S^{n-1}(0)), 
\]
and 
\[
\ell(\tau)\neq 0,\ell(\tau)\neq S(0)\dts
\ell(\tau)\neq S^{n-1}(0),\ell(\tau)=S^n(0), 
\]
are derivable using propositional logic from term instances of axioms
1.-5. for terms $S^k(0)$, $k=0\dts n$, and
$\Lambda,f_{w_0}(\Lambda)\dts f_{w_{n-1}}\dots f_{w_1}f_{w_0}(\Lambda)$.
(We denote by $(\neg)^0$ the empty
symbol, and $(\neg)^1$ stands for $\neg$.)
\end{lemma}
\begin{proof}
By induction construct such proofs for all subterms of $\tau$. The
induction step is done using Lemma~\ref{lA} and axioms 4. and 5.
\end{proof}

We represent a computation of the machine by a two dimensional array
where each entry has $d$ registers, each register containing one bit. The first
index is time, the second is a position on the tape. The content of
the $k$-th register in time $s$ and position $t$ is determined by a
predicate $Q_k(z,s,t)$. The variable $z$ stands for the entire
array. The sentence $\Phi$ will express the fact that, for every $x$, there
exists $y$ such that $M$ accepts the input $(x,y)$. We do not need to
mention $y$ explicitly, because it is encoded in the array $z$. We use
skolemization to eliminate the existential quantifier, thus the array
will be represented by $F(x)$.

The initial configuration of the machine is formalized by the
following axioms.
\ben
\item[6.] $Q_1(F(x),0,t)\equiv P(x,t)$, $Q_2(F(x),0,t)\equiv
  \ell(x)=t$.

(Clearly, the predicate $P(x,t)$ is dispensable and can be replaced
by $Q_1(0,F(x),t)$, but it would complicate the presentation above.)
The second input is encoded in the same way using $Q_3$ and $Q_4$, but
we do not need any axioms about it.

\item[7.] $Q_5(F(x),0,0)$, $\neg Q_5(F(x),0,S(t))$,
\item[8.] $\neg Q_i(F(x),0,t)$, for $i=6\dts d$.
\een

The transition function is formalized by axioms of the form:
\ben
\item[9.] $Q_i(F(x),S(s),0)\equiv \rho_i$,\\ 
  $Q_i(F(x),S(s),S(t))\equiv \psi_i$,\\ 
  for $i=1\dts d$, 
\een
where $\rho_i$ and $\psi_i$ are propositions composed from atomic
formulas of the form $Q_j(F(x),s,0)$,  $Q_j(F(x),s,S(0))$,
repspectively,  $Q_j(F(x),s,t)$, $Q_j(F(x),s,S(t))$, $Q_j(F(x),s,SS(t))$
for $j=1\dts d$. 

Finally, we postulate that the machine never rejects the input:
\ben
\item[10.] $\neg Q_6(F(x),s,t)$.
\een

Let $\phi(x,r,s,t)$ be the conjunction of the axioms 1.-10., and let
$\Phi$ be $\forall x\forall r\forall s\forall t\ \phi(x,r,s,t)$. 

\medskip To show that $\Phi$ satisfies the theorem, we have first to
show that $\Phi$ is consistent. To this end we take a function
$\gamma$ such that $R(x,\gamma(x))$ is true for all $x$. We interpret
the predicate symbols $P(x,t)$, $Q_i(z,s,t)$, for $i=1\dts d$,
constants $0,\Lambda$, and function symbols
$S(x),f_0(x),f_1(x),\ell(x)$ as explained above. The function symbol
$F(x)$ represents the function that maps a given string $x$ to the
array encoding the computation of the machine $M$ on the input
$(x,\gamma(x))$. Hence $\Phi$ is consistent.

%  We will define a model of $\Phi$ using the
% standard natural numbers $\N$. $0$ and $S(x)$ are the same as in
% $\N$. We represent a string of bits $w$ by a number whose binary
% representation is $1w$.  Thus given natural numbers $m$ and $n$,
% $P(m,n)$ is true iff $2^{n+1}\leq m$ and the $n$-th bit of $m$ is $1$
% (we are indexing bits starting from 0); $\ell(m)=n$ is true iff $m>0$
% and $n=\lfloor\log_2 m\rfloor$. The interpretation of $f_0$ and $f_1$
% should now be clear. The relations $Q_i$, $i=1\dts d$ are interpreted
% in a similar way; specifically, we use numbers with $dm^2$ bits in
% their binary representation. To interpret the function symbol $F$ we
% choose, for every $w$, some $u$ such that $\N\models R(w,u)$. Then we
% take the encoding of the computation of the machine $M$ on the input
% $(w,u)$ as the interpretation of $F(w)$.

\medskip
Second, we have to construct a reduction
from the search problem to finding truth assignments of the term
instances of $\phi$. The reduction 
is defined as
follows. Let $w\in\{0,1\}^n$ be given. 
Let $\tau_l$ denote the term $f_{w_{l}}\dots f_{w_1}f_{w_0}(\Lambda)$
for $l=0\dts n-1$.
Finding a solution $u$ such that
$R(w,u)$ will be reduced to finding a truth assignment to the atomic
formulas of 
\[
\bigwedge_{i,j,k=0}^{p(n)}\bigwedge_{l=0}^{n-1}\phi(F(\tau_l),S^i(0),S^j(0),S^k(0))
\]
that makes this formula true. We will denote this formula by
$\Psi_w$. Note that we need the second conjunction to run over all
numbers $l=0\dts n-1$, because we need to derive formulas from
Lemmas~\ref{lA} and \ref{lB}, but the instances of the axioms 6.-10. for
$x=\tau_l$, $l<n-1$ will not be used.

It is clear that $\Psi_w$ can be constructed in polynomial
time, so we only need to show that from any satisfying assignment
$A$, we can construct some $u$ in polynomial time such that
$R(w,u)$. 
We start by observing that according to Lemma~\ref{lA},
$A(P(\tau_{n-1},S^i(0)))=\top$ iff $w_i=1$ for $i=0\dts
n-1$. Similarly for the atomic formulas $\ell(\tau_{n-1})=S^i(0)$ for
$i=0\dts n$, so the
truth values for these formulas represent $w$. By axioms 6., $w$ is
also correctly represented by the truth values of $Q_1(F(\tau_{n-1}),0,S^i(0))$
and $Q_2(F(\tau_{n-1}),0,S^i(0))$. The axioms 7.-9. then ensure that the
truth values of $Q_k(F(\tau_{n-1}),S^i(0),S^j(0))$, for $i,j=0\dts
p(n),k=1\dts d$ represent a computation of the Turing machine $M$ on
$w$ and some $u$, where $u$ is coded by the truth values of
$Q_3(F(\tau_{n-1}),0,S^i(0))$ and $Q_4(F(\tau_{n-1}),0,S^i(0))$. Since the machine
must stop within the limit $p(n)$ and the instances of the axiom
10. ensure that it does not reject, the computation must be
accepting. Hence the string $u$ is such that $R(w,u)$. We just note
that $u$ can easily be constructed from the truth assignment $A$. 

\end{proof}

\section{Reductions among HCS problems}

In order to show connection between provability of $\Phi\to\Psi$ and
polynomial reducibility of $HCS(\Psi)$ to $HCS(\Phi)$, we prove that
provability implies polynomial reducibility if $\Psi$ is universal. 

\begin{proposition}\label{prop4.1}
  Let $\Phi$ be a consistent sentence in a prenex form. Let $\Psi$ be
  a universal sentence such that $\Phi\vdash\Psi$. Then $HCS(\Psi)$ is
  polynomially reducible to $HCS(\Phi)$.
\end{proposition}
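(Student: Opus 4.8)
The plan is to extract from the derivation $\Phi\vdash\Psi$ a fixed finite Herbrand certificate and use it to translate an arbitrary $HCS(\Psi)$ instance into a single $HCS(\Phi)$ query.

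First I would pass to the universal case on the $\Phi$ side. Since $Sk(\Phi)\vdash\Phi\vdash\Psi$ and $Sk(\Phi)$ is universal, write $Sk(\Phi)=\forall\bar x\,\phi(\bar x)$ (so that, by the definition of $HCS$, instances of $HCS(\Phi)$ are satisfiability problems for conjunctions of term instances of $\phi$) and $\Psi=\forall\bar y\,\psi(\bar y)$ with $\psi$ quantifier-free (so that, $\Psi$ being universal, $HCS(\Psi)$ concerns conjunctions of term instances of $\psi$). Introduce fresh constants $\bar c$. Then $Sk(\Phi)\wedge\neg\psi(\bar c)$ is inconsistent and is the universal closure of the quantifier-free formula $\phi(\bar x)\wedge\neg\psi(\bar c)$, so Herbrand's theorem (in the dual form stated above) yields finitely many term tuples $\bar t_1,\dots,\bar t_m$ in the language of $Sk(\Phi)$ together with $\bar c$ such that
\[
\bigwedge_{j=1}^{m}\phi(\bar t_j)\;\to\;\psi(\bar c)
\]
is a propositional tautology. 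The number $m$ and the tuples $\bar t_j$ depend only on $\Phi$ and $\Psi$; they may be astronomically large (non-elementary in $|\Phi|+|\Psi|$), but that is a constant as far as the reduction is concerned. Since a substitution instance of a propositional tautology is again a tautology (a satisfying assignment of the substituted formula pulls back along the substitution), replacing $\bar c$ by any term tuple $\bar\tau$ keeps $\bigwedge_j\phi(\bar t_j[\bar c:=\bar\tau])\to\psi(\bar\tau)$ a tautology; here I assume, as is natural, that the language of $\Psi$ is contained in that of $\Phi$, so each $\bar t_j[\bar c:=\bar\tau]$ is a legitimate term tuple for $HCS(\Phi)$.

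The reduction is then: given an $HCS(\Psi)$ instance consisting of term tuples $\bar\tau_1,\dots,\bar\tau_n$, compute the $mn$ term tuples $\bar\sigma_{i,j}:=\bar t_j[\bar c:=\bar\tau_i]$, which is polynomial time since $m$ is a fixed constant, and feed them to the $HCS(\Phi)$ oracle, obtaining a truth assignment $A$ making $\bigwedge_{i,j}\phi(\bar\sigma_{i,j})$ true. Output the assignment $B$ to the atoms of $\bigwedge_i\psi(\bar\tau_i)$ that agrees with $A$ wherever $A$ is defined and is arbitrary (say $\bot$) elsewhere. To check correctness, extend $A$ to an assignment $A'$ defined on all atoms; for each $i$, $\bigwedge_j\phi(\bar\sigma_{i,j})$ holds under $A'$, so by the displayed tautology $\psi(\bar\tau_i)$ holds under $A'$; and $B$ agrees with $A'$ on every atom occurring in $\psi(\bar\tau_i)$, hence $\psi(\bar\tau_i)$ holds under $B$, for every $i$. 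Thus $B$ solves the given instance, and in fact this is a many-one reduction.

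The parts requiring care, rather than cleverness, are: (i) the Herbrand extraction for $Sk(\Phi)\wedge\neg\psi(\bar c)$, namely handling the ground conjunct $\neg\psi(\bar c)$ correctly so that one genuinely obtains the implication form displayed above; and (ii) the step of extending $A$ to $A'$ arbitrarily, which is what lets us ignore atoms of $\psi(\bar\tau_i)$ that happen not to occur in any $\phi(\bar\sigma_{i,j})$ (for instance atoms built from relation symbols of $\Psi$ absent from $\phi$). Neither is deep: beyond this bookkeeping I expect no real obstacle, and in particular the enormous blow-up hidden in $m$ and the $\bar t_j$ is harmless because it does not depend on the input.
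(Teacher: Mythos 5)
Your proposal is essentially the paper's own proof: skolemize $\Phi$, extract from $\Phi\vdash\Psi$ a fixed Herbrand certificate with fresh constants $\bar c$ standing for the universal variables of $\Psi$, substitute the terms of the given $HCS(\Psi)$ instance for $\bar c$, query the $HCS(\Phi)$ oracle on the resulting (constant-size-blowup) family of instances of $\phi$, and read the answer back through the tautology, extending the assignment arbitrarily on atoms not covered. All of that is correct, including the observation that the non-elementary size of the certificate is harmless because it depends only on $\Phi$ and $\Psi$.

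The one place where you restrict the statement is the assumption, made explicitly, that the language of $\Psi$ is contained in that of $\Phi$; the proposition does not assume this, and indeed $\Psi$ (and hence the terms of an $HCS(\Psi)$ instance, as well as the Herbrand terms extracted over the combined language) may contain function symbols and constants absent from $Sk(\Phi)$, in which case the tuples $\bar t_j[\bar c:=\bar\tau_i]$ are not legitimate inputs to the $HCS(\Phi)$ oracle as defined. The paper closes this with a one-line device you would need to add: since the terms only serve to determine which atomic formulas coincide, replace each maximal subterm not in the language of $\Phi$ by a fresh variable, the same variable for syntactically identical subterms; this yields a genuine $HCS(\Phi)$ instance that is the same propositional formula up to renaming of atoms, so the oracle's assignment transfers back unchanged. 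Alien relation symbols of $\Psi$ are already handled by your step (ii). With that patch your argument proves the proposition in full generality, and in fact as a many-one reduction.
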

\begin{proof}
  Note that $Sk(\Phi)\vdash\Phi$ and $HCS(Sk(\Phi))$ is the same as
  $HCS(\Phi)$. Hence we can w.l.o.g. assume that $\Phi$ is
  universal. 

Let $\Phi$ and $\Psi$ be $\forall x_1\dots
  x_k\phi(x_1\dts x_k)$ and $\forall y_1\dots y_l\psi(y_1\dts
  y_l)$ and assume that  $\Phi\vdash\Psi$. Then we have 
\[
\vdash \forall y_1\dots y_l\exists x_1\dots x_k
(\phi(x_1\dts x_k)\to\psi(y_1\dts y_l)).
\]
The herbrandization of this sentence is
\[
\vdash \exists x_1\dots x_k
(\phi(x_1\dts x_k)\to\psi(c_1\dts c_l)),
\]
where $c_1\dts c_l$ are new constants. According to Herbrand's
theorem, there exist terms $\tau_{ij}$ such that
\bel{e-x}
\bigvee_i \left(\phi(\tau_{i1}\dts\tau_{ik})\to\psi(c_1\dts c_l)\right)
\ee
is a propositional tautology. 

Let $\sigma_{p1}\dts\sigma_{pl}$, $p=1\dts n$,  be terms in the
language of $\psi$. We substitute these terms into (\ref{e-x}) for
$c_1\dts c_l$. The
resulting formulas propositionally imply the following formula
\[
\bigwedge_p\bigwedge_i\phi(\tau^*_{pi1}\dts\tau^*_{pik})\to
\bigwedge_p\psi(\sigma_{p1}\dts\sigma_{pl}),
\]
where
$\tau^*_{pij}:=\tau_{ij}[\sigma_{p1}/c_1\dts\sigma_{pl}/c_l]$. Thus in
order to satisfy $\bigwedge_p\psi(\sigma_{p1}\dts\sigma_{pl})$, it
suffices to satisfy
$\bigwedge_p\bigwedge_i\phi(\tau^*_{pi1}\dts\tau^*_{pik})$. In
general, the latter formula is not an instance of $HCS(\Phi)$ because
$\Psi$ may use other function symbols. However, note that the role of
terms is only to determine which atomic formulas are same and which
are different. Hence to get an instance of $HCS(\Phi)$ that is
essentially the same propositional formula, it suffices to replace the
maximal terms that are not in the language of $\Phi$ by variables (the
same variables for the same terms, of course).
\end{proof}

We will prove a similar result for existential sentences. 
Note that $HCS(\exists y_1\dts y_m\alpha(y_1\dts y_m))$ is trivial,
because the skolemization of the sentence does not contain any
variables, hence it is a finite problem. Thus one has to state the
result in a slightly different way.

\bl\label{lem4.2} 
Let $\Phi$ be a consistent universal sentence, let
$\alpha(y_1\dts y_m)$ be an open formula with $m$ free variables
and let $c_1\dts c_m$ be constants not occurring in $\Phi$ and
$\alpha$. Then
$HCS(\Phi\wedge\forall\bar{y}(\alpha(\bar{y})\to\alpha(\bar{c})))$ is
polynomially reducible to $HCS(\Phi)$.  
\el

\begin{proof}
Let $\Phi$ be $\forall x_1\dts x_n\phi(x_1\dts x_k)$. Let an instance
of $HCS(\Phi\wedge\forall\bar{y}(\alpha(\bar{y})\to\alpha(\bar{c})))$
be given; i.e., we want to find a satisfying assignment for 
\bel{e2}
\bigwedge_{i=1}^n
  (\phi(\tau_{i1}\dts
  \tau_{ik})\wedge(\alpha(\sigma_{i1}\dts\sigma_{im})\to\alpha(c_1\dts c_m)))
\ee
for given terms $\tau_{ij}$, $\sigma_{il}$. We suppose that we have an
oracle for $HCS(\Phi)$. Denote by $F:=\bigwedge_{i=1}^n
\phi(\tau_{i1}\dts \tau_{ik})$. Let $F_i$ denote $F$ where we
substitute $c_1\mapsto\sigma_{i1}\dts c_m\mapsto\sigma_{im}$ for
$i=1\dts n$. Now we apply our oracle to $F\wedge\bigwedge_{i=1}^n
F_i.$ The terms in this formula may contain constants $c_i$, which are not
in the language of $\Phi$, but we can interpret them as variables to
satisfy the formal definition of $HCS(\Phi)$. Let $A$ be a truth
assignment for the atomic formulas of 
$F\wedge\bigwedge_{i=1}^n F_i$
that makes the formula true. Extend $A$ to an arbitrary assignment that
gives truth values also to those atomic formulas of $\alpha(c_1\dts
c_m)$ and $\alpha(\sigma_{i1}\dts\sigma_{im})$, $i=1\dts n$, for which
$A$ is not defined (e.g., let they be all false). Now we consider two
cases.

1. The assignment $A'$ satisfies the formula (\ref{e2}). Then we are done.

2. Assume it does not. Then, for some $i$, it satisfies
$\alpha(\sigma_{i1}\dts\sigma_{im})$. We define a truth
assignment $A''$ for the formula (\ref{e2}) using the part of $A'$
that assigns values to $F_i$ and
$\sigma_{i1}\dts\sigma_{im}$. Specifically, given $\beta(c_1\dts
c_m)$, an atomic subformula of (\ref{e2}), or an atomic subformula
$\alpha(c_1\dts c_m)$, we assign to it the values that $A'$ gives to
$\beta(\sigma_{i1}\dts \sigma_{im})$. Thus $A''$ satisfies
$\bigwedge_{i=1}^m\phi(\tau_{i1}\dts\tau_{ik})$, because $A'$
satisfies $F_i$, and it also satisfies
$\bigwedge_{i=1}^m(\alpha(\sigma_{i1}\dts\sigma_{im})\to\alpha(c_1\dts
c_m)))$, because $A'$ satisfies
$\alpha(\sigma_{i1}\dts\sigma_{im})$. Thus $A''$ satisfies (\ref{e2}).
\end{proof}

\bpr
  Let $\Phi$ be a consistent sentences in a prenex form $\forall
  x_1\dts x_k\phi(x_1\dts x_k)$ and let $\alpha$ be an an open formula
  with $m$ variables. Suppose that $\Phi\vdash\exists y_1\dts
  y_m\alpha(y_1\dts y_m)$. Then 
$
HCS(\exists \bar{y}\forall \bar{x}(\phi(\bar{x})\wedge\alpha(\bar{y})))
$
is polynomially reducible to
$HCS(\Phi)$.
\epr

\begin{proof}
  The skolemization of the sentence $\exists \bar{y}\forall
  \bar{x}(\phi(\bar{x})\wedge\alpha(\bar{y}))$ is the universal
  sentence
\[
\Psi:=\forall x_1\dts x_k(\phi(x_1\dts x_k)\wedge\alpha(c_1\dts c_m))
\] 
where $c_1\dts c_m$ are new constants. This sentence is provable from
$\Phi\wedge\forall\bar{y}(\alpha(\bar{y})\to\alpha(\bar{c}))$,
because $\Phi$ proves $\exists\bar{y}\ \alpha(\bar{y})$.  Thus,
according to Proposition~\ref{prop4.1}, $HCS(\Psi)$ is polynomially
reducible to 
$HCS(\Phi\wedge\forall\bar{y}(\alpha(\bar{y})\to\alpha(\bar{c})))$.
This problem in turn is reducible to
$\Phi$ by Lemma~\ref{lem4.2}. The polynomial reducibility of
$HCS(\Psi)$, hence also of $HCS(\exists \bar{y}\forall
\bar{x}(\phi(\bar{x})\wedge\alpha(\bar{y})))$, follows by transitivity
of reducibility.
\end{proof}

\section{Provably total search problems in $T$ and HCS($T$)}

Since Herbrand consistency search is defined for sentences, we will
only consider finitely axiomatized theories.  If $T$ contains a
sufficiently strong fragment of arithmetic, or set theory, and $T$ is
sound, we can formalize polynomial time computations in $T$. Then we
can associate with $T$ the class of all search problems that are
provably total in $T$. These are problems that can be defined by a
formula $\rho$ such that $T\vdash\forall x\exists y\rho(x,y)$. In
order to avoid trivialization, we have to restrict the formulas $\rho$ to a
class of formulas that define polynomial time relations in a natural
way. Some theories have symbols for every polynomial time computable
relation, e.g., Cook's $PV$~\cite{cook}. We can also use formulas that
define $\bf NP$ relations, e.g., Buss's $\Sigma_1^b$ in bounded
arithmetic $T_2$~\cite{buss1}. Then the problem of characterizing
provably total search problems is, essentially, equivalent to the
problem of characterizing provable sentences that are universal
closures of $\Sigma_1^b$ formulas. Since $HCS(T)$ is a polynomial
search problem associated with $T$, a natural question arises,
whether or not $HCS(T)$ is in the class of polynomial search problems
provably total in $T$. (Here we assume that $T$ is axiomatized by sentences
in prenex form.) We state this question in a slightly more general
way.
\begin{problem}
Let $T$ be a finitely axiomatized theory, sufficiently strong to be
able to formalize polynomial time computations. Is there a polynomial search
problem $R$ that is provably total in $T$ and such that  $HCS(T)$ is
polynomially reducible to $R$?
\end{problem}

We note that if $T$ is sufficiently strong, then $T$ does not prove
the totality for the natural formalization of $HCS(T)$. Indeed, if $T$
proves Herbrand's theorem, then $T$ can prove that natural provability
(in Hilbert-style calculi, sequent calculi with cuts, etc.) is
equivalent to provability in the sense of Herbrand. Hence the sentence
expressing the totality of $HCS(T)$ is equivalent to the formal
consistency of $T$. Thus by G\"odel's incompleteness theorem, it is
not provable.

A natural approach to solving the problem positively is to try to
express $HCS(T)$ in the following way:
\[
\rho(x,y)\vee\sigma(z),
\]
where $\rho$ is the natural formalization of $HCS(T)$ and $\sigma(z)$
expresses that $z$ is a proof of contradiction from $T$. The task of
this search problem is either to find a satisfying assignment for term
instances of the matrix $\phi$ of $T$ as required by $HCS(T)$, or a
proof of contradiction.  If $T$ is consistent, then this formula is
equivalent to $\rho(x,y)$, hence defines $HCS(T)$. If, moreover, $T$ is
sufficiently strong, then it does prove $\forall x\exists
y,z(\rho(x,y)\vee\sigma(z))$, but this does not suffice. We need $y$
and $z$ to be polynomially bounded:
\[
\forall x\exists y,z(|y|,|z|\leq p(|x|)\wedge(\rho(x,y)\vee\sigma(z))),
\]
for some polynomial $p$. The problem is only to bound $|z|$, since
$|y|$ is polynomially bounded according to the definition of search
problems. Thus we need (to be able to prove from $T$) that if for some
$x$, $\rho(x,y)$ is unsatisfiable, then there exists $z$, a proof of
contradiction from $T$, of at most polynomial length. If $\rho(x,y)$
is unsatisfiable, we know how to construct a contradiction---we have
an unsatisfiable propositional formula, hence we can derive a
contradiction in propositional calculus. However, we do not know if
such a proof can have polynomial length. Thus we do not see how to use
this approach to solve the problem and we tend to conjecture that the
answer is negative.

\section{A relativization}

We conjecture that there is no complete total polynomial search problem. In
order to support this conjecture, we will construct an oracle relative
to which there is no complete polynomial search problem. We will only
prove the proposition for many-one reductions, but the same argument
can surely be used for general reductions.

\bpr There exists an oracle $R$ such that relative to $R$, there is no
complete total polynomial search problem with respect to many-one
polynomial reductions.  
\epr
\bprf
We start by observing that the condition that $R$ has a many-one
polynomial reduction to $S$ can be equivalently defined as follows:
there exist polynomial time computable functions $f(x)$ and $g(x,y)$
such that for all $x$ and $y$
\[
S(f(x),y)\to R(x,g(x,y))
\] 
holds true.

The oracle that we construct will be represented by a ternary relation
$R(p,x,y)$ on binary strings. We will view $p$ as a parameter that
specifies a binary relation $R_p(x,y)$ that may be a total polynomial
search problem. We will construct $R$ so that the condition
\[
R_p(x,y)\to |y|\leq |x|
\]
is satisfied for all $x$ and $y$. Let $\rho$ and $f,g$ be definitions
of a binary relation and two functions by means of polynomial time
oracle Turing machines. Given an oracle $R$, we denote by $\rho_R$ and
$f_R,g_R$ the corresponding relation and functions. We will assume
that the conditions $\rho(x,y)\to |y|\leq |x|$ and $|g(x,y)|\leq
|x|$ are ensured by the definition of $\rho$. We need to construct $R$
so that the following holds true for every $\rho$: \ben
\item either $\rho_R$ is not total, i.e., 
\bel{e4.1}
\exists x\forall y\ \neg\rho_R(x,y),
\ee
\item or for some $p$, $R_p$ is total, but not reducible to $\rho_R$, i.e., for
  every $f,g$,
\bel{e4.2}
\forall x\exists y\ R_p(x,y)\quad\wedge
\ee\bel{e4.3}
\exists x\exists y\ (\rho_R(f_R(x),y)\wedge\neg R_p(x,g_R(x,y))). 
\ee
\een

Our procedure will have two loops---outer and inner. In the outer
loop we go over all definitions $\rho$; in the inner one we go over
all pairs of definitions of $f,g$. In the process we will define $R$
gradually for more and more triples $p,x,y$. At each stage $R$ is
defined only for a finite number of parameters $p$.  At the beginning
of the $i$th outer loop we take $p_i$ such that no value of
$R_{p_i}(x,y)$ has been fixed so far and gradually define
$R_{p_i}(x,y)$. At each stage of this loop $R_{p_i}$ will be defined only
for a finite number of pairs $x,y$.

The outer loop serves us to diagonalize over definitions $\rho$, which
means that at the end of round $i$ the conditions 1. and 2. above will
be satisfied for the $i$th $\rho$.  The partial definition of $R$ will
be denoted by $R^i$.  Similarly, in the inner loop we diagonalize over
functions $f,g$. Let $R^{ij}$ denote the $j$th step of the inner loop
inside of the loop $i$. Then we will get that either 1. holds true and
the loop stops, or 2. holds true for the $j$th pair $f,g$.

Suppose we are in the outer loop $i$. At the beginning of each inner
loop $j$ we assume the following properties of $R^{i(j-1)}$, the oracle
  so far defined.  For every $x$ for which some value of
  $R_{p_i}(x,y)$ has been fixed, there exists some $y$, such that
  $R_{p_i}(x,y')$ (and $|y'|\leq|x|$). Let $n_{ij}$ be a sufficiently
  large number and let $x_{ij}$, $|x_{ij}|=n_{ij}$ be a string such
  that the following conditions are satisfied.  
\bi
\item $R_{p_i}(x_{ij},y)$ has not been fixed for any $y$ and,
\item for every $y'$, $|y'|\leq |f(x_{ij})|$,  the Turing machines of $\rho,f$
  and $g$ cannot query all strings $y$ of length $n_{ij}$ when used on the
  inputs $x_{ij},y',f(x_{ij})$  (because of the polynomial bounds on
  the computations of $\rho,f$ and $g$).
\ei
(The string $x_{ij}$ can be the string of $n_{ij}$ zeros.) First we
extend the oracle so that for every $x$, $|x|<n_{ij}$, there is some
$y$, $|y|\leq |x|$, such that $(x,y)$ is in $R_p$. This is possible,
because we assume that for the strings $x$ used in previous stages
this has already been ensured. We now consider two cases.

\medskip {\it Case 1:} The currently defined oracle can be extended so
that condition~(\ref{e4.3}) is satisfied for $x_{ij}$ and some $y$,
($|y|\leq |x_{ij}|$). In this case we fix the minimum number of values
of $R$ that are needed to ensure this condition. Then it is still
possible to add pairs $(x,y)$ to $R_p$ to ensure $\exists y\ R_p(x,y)$
for all strings $x$ so far used.

\medskip
{\it Case 2:} The opposite is true. This means that for every extension of
the so far specified oracle $R$, and every $y$, $|y|\leq |x_{ij}|$,
the implication 
\[
\rho_R(f_R(x_{ij}),y)\to R_{p_i}(x_{ij},g_R(x_{ij},y))
\]
is satisfied. In particular, the implication will be satisfied if we
fix $R$ so that for all  $z$, $|z|\leq |x_{ij}|$, $\neg
R_{p_i}(x_{ij},z)$. It follows that $\neg\rho_R(f_R(x_{ij}),y)$ for
all $y$, $|y|\leq |f_R(x_{ij})|$. Hence (\ref{e4.1}) holds true for
all further extension of so far defined~$R$.
\eprf

\section{Conclusions}

We still do not fully understand the relation between provability and
polynomial reducibility of the corresponding Herbrand consistency
search problems. The most important problem is:

\begin{problem}\label{problem1}
Let $\Phi$ and $\Psi$ be consistent sentences in prenex forms. Suppose
that $\Phi\vdash\Psi$. Is then $HCS(\Psi)$ polynomially reducible to
$HCS(\Phi)$? 
\end{problem}

We have only been able to solve the problem in two special cases: for
universal sentences $\Psi$ and for existential sentences $\Psi$. If
the answer is negative, then the concept of Herbrand consistency is
not well-behaved. In such a case it would be better to compare the 
provability of $\Phi\to\Psi$ with the reducibility of $HCS(\Psi)$ to
$HCS(\alpha)$ for all prenex sentences $\alpha$ derivable from~$\Phi$.

Another interesting open problem is whether or not polynomial reducibility of
$HCS(\Psi)$ to $HCS(\Phi)$ implies $\Phi\vdash\Psi$ at least in some
special cases.

\subsection*{Acknowledgment}
I would like to thank Neil Thapen for his comments on an early draft of
this paper.

\end{document}